\newtheorem{thm}{Theorem}
\newtheorem{lem}{Lemma}
\newtheorem{fact}{Fact}
\theoremstyle{definition}
\renewenvironment{proof}[1][\proofname] {\par\pushQED{\qed}\normalfont\topsep6\p@\@plus6\p@\relax\trivlist\item[\hskip\labelsep\bfseries#1\@addpunct{.}]\ignorespaces}{\popQED\endtrivlist\@endpefalse}
\begin{document}
\title{Covert Communications on Poisson Packet Channels}

\author{
   \IEEEauthorblockN{Ramin Soltani\IEEEauthorrefmark{1},
            Dennis Goeckel\IEEEauthorrefmark{1}, Don Towsley\IEEEauthorrefmark{2}, and Amir Houmansadr\IEEEauthorrefmark{2}}

\IEEEauthorblockA{\IEEEauthorrefmark{1}Electrical~and~Computer~Engineering~Department,~University~of~Massachusetts,~Amherst,
    \{soltani, goeckel\}@ecs.umass.edu\\}
    \IEEEauthorblockA{\IEEEauthorrefmark{2}College of Information and Computer Sciences, University of Massachusetts, Amherst,
    \{towsley, amir\}@cs.umass.edu}
        
                       \thanks{ This work has been supported by the National Science Foundation under grants ECCS-1309573 and CNS-1525642.}

\thanks{ This work has been presented at the 53rd Annual Allerton Conference on Communication, Control, and Computing,  October 2015. The extension of this work to Renewal packet channels is~\cite{soltani2016allertonarxiv}, which has been presented at the 54th Annual Allerton Conference on Communication, Control, and Computing,  October 2016.}
\thanks{Personal use of this material is permitted. Permission from IEEE must be obtained for all other uses, in any current or future media, including reprinting/republishing this material for advertising or promotional purposes, creating new collective works, for resale or redistribution to servers or lists, or reuse of any copyrighted component of this work in other works. DOI: \href{https://doi.org/10.1109/ALLERTON.2015.7447124}{10.1109/ALLERTON.2015.7447124}}

}

\date{}
\maketitle
\thispagestyle{plain}
\pagestyle{plain}

\begin{abstract}
Consider a channel where authorized transmitter Jack sends packets to authorized receiver Steve according to a Poisson process with rate $\lambda$ packets per second for a time period $T$. Suppose that covert transmitter Alice wishes to communicate information to covert receiver Bob on the same channel without being detected by a watchful adversary Willie. We consider two scenarios. In the first scenario, we assume that warden Willie cannot look at packet contents but rather can only observe packet timings, and Alice must send information by inserting her own packets into the channel. We show that the number of packets that Alice can covertly transmit to Bob is on the order of the square root of the number of packets that Jack transmits to Steve; conversely, if Alice transmits more than that, she will be detected by Willie with high probability. In the second scenario, we assume that Willie can look at packet contents but that Alice can communicate across an $M/M/1$ queue to Bob by altering the timings of the packets going from Jack to Steve. First, Alice builds a codebook, with each codeword consisting of a sequence of packet timings to be employed for conveying the information associated with that codeword.  However, to successfully employ this codebook, Alice must always have a packet to send at the appropriate time.  Hence, leveraging our result from the first scenario, we propose a construction where Alice covertly slows down the packet stream so as to buffer packets to use during a succeeding codeword transmission phase.  Using this approach, Alice can covertly and reliably transmit $\mathcal{O}(\lambda T)$ covert bits to Bob in time period $T$ over an $M/M/1$ queue with service rate $\mu > \lambda$.
\end{abstract}

\textbf{Keywords:} Covert Bits Through Queues, Covert Communication, Covert Communication over Queues, Covert Computer Network, Covert Wired Communication, Low Probability of Detection, LPD, Covert Channel, Covert Timing Channel, Covert Packet Insertion, Poisson Point Process, Information Theory.


\section{Introduction}

The provisioning of security and privacy is critical in modern communication systems.  The vast majority of research in this area has focused on protecting the message content through encryption \cite{talb2006} or emerging methods in information-theoretic security \cite{bloch11pls}.  However, there are applications where the very existence of the message must be hidden from a watchful adversary.  For example, in the organization of social unrest against an authoritarian regime, the mere existence of an encrypted message would likely lead to the shut down of that communication and possible punishment of the users.  Other applications include hiding the presence or scale of military operations, or removing the ability of users to be tracked in their everyday activities.

Whereas the provisioning of undetectable communications in objects with symbols drawn from a finite field and transmitted over noiseless channels (e.g. images) has been extensively studied in the steganogaphic community over a number of years \cite{ker07pool},
the consideration of hiding messages on continuous-valued noisy channels has only recently drawn considerable attention.  In particular, despite the extensive historical use of spread spectrum systems 
for low probability of detection (LPD) communications, the fundamental limits of covert communications over the additive white Gaussian noise (AWGN) channel were only recently obtained \cite{bash12sqrtlawisit,bash_jsac2013}.  The work of \cite{bash12sqrtlawisit, bash_jsac2013} motivated significant further work (e.g. \cite{boulat_commmagg,commL,bash13quantumlpdisit,
bash14timing,
soltani14netlpdallerton,
tammy_asilomar2015, che13sqrtlawbscisit,kadhe14sqrtlawmultipathisit,hou14isit,bloch15covert-isit,wang15covert-isit,lee14posratecovert,
jaggi_uncertain}), with these works taken together providing an almost complete characterization of the foundational limits of covert communications over discrete memoryless channels (DMCs) and AWGN channels.  

In this paper, we turn our attention to the consideration of a covert timing channel appropriate for packet-based communication channels, e.g., computer networks.  More specifically, we study how packet insertion or inter-packet delays can be used for covert communications. The use of inter-packet delays for covert communication was first explored by Girling~\cite{Girling87} and was later studied by a number of other authors~\cite{berk2005detection,shah2006keyboards,liu2009hide,houmansadr2011coco}.  In particular, Anantharam et~al.~\cite{verdubitsq} derived the Shannon capacity of the timing channel with a single-server queue, and Dunn et~al.~\cite{dunn2009} analyzed the secrecy capacity of such a system.  Considerable work has focused on quantifying and optimizing the capacity of timing channels~\cite{millen1989finite,Anand1998inf,sekke2007capacity,sekke2009,Mosko92Capac,Moskowitz1994} by leveraging information theoretic analysis and the use of various coding techniques \cite{kiyavash2009J,kiyavash2009A,Arch2012}. Moreover, methods for detecting timing channels~\cite{gianvecchio2007detecting} as well as eluding the detection by leveraging the statistical properties of the legitimate channel have been proposed~\cite{gianvecchio2008model}. 

In this paper, we consider a channel where an authorized (overt) transmitter, Jack, sends packets to an authorized (overt) receiver, Steve, within a time period $T$, where the timings of packet transmission are modeled by a Poisson point process with rate $\lambda$ packets per second.  Covert transmitter Alice may wish to transmit data to a covert receiver, Bob, on this channel in the presence of a warden, Willie, who is monitoring the channel between Alice and Bob precisely to detect such transmissions.  We consider two scenarios in detail. In the first scenario, we assume:  (1)  the warden Willie is not able to see packet contents, and therefore cannot verify the authenticity of the packets (e.g., whether they are actually sent by Jack); and  (2) Alice is restricted to packet insertion.  Willie is aware that the timing of the packets of the overt communication link follows a Poisson point process with rate $\lambda$, so he seeks to apply hypothesis testing to verify whether the packet process has the proper characteristics.   We show that (see Theorem~1) if Alice decides to transmit $\mathcal{O}\left(\sqrt{\lambda T}\right)$ packets to Bob, she can keep Willie’s sum of error probabilities $\mathbb{P}_{FA} + \mathbb{P}_{MD} > 1-\epsilon$ for any $0<\epsilon<1$, where $\mathbb{P}_{FA}$ and $\mathbb{P}_{MD}$ are Willie's probability of false alarm and missed detection, respectively. Conversely, we prove that if Alice transmits ${\omega}\left(\sqrt{\lambda T}\right)$ packets, she will be detected by Willie with high probability. 

In the second scenario, we assume that Willie can look at packet contents and therefore can verify packets' authenticity.  Thus, Alice is not able to insert packets, but we allow Alice in this scenario to alter the packet timings to convey information to Bob, whom is receiving the packets through a $M/M/1$ queue with service rate $\mu>  \lambda$.   To do such, Alice designs an efficient code \cite{verdubitsq}, where a codeword consists of a sequence of packet timings drawn from the same process as the overt traffic; hence, a codeword transmitted with those packet timings is undetectable.  However, there is a causality constraint, as Alice clearly cannot send the next packet (i.e. codeword symbol) unless she has a packet from the Jack to Steve link available to transmit.   This suggests the following two-stage process.  In the first stage, Alice covertly slows down the transmission of the packets from Jack to Steve so as to buffer some number of packets.  In the second stage, Alice continues to add packets transmitted by Jack to her buffer while releasing packets with the inter-packet delay appropriate for the chosen codeword.   Alice's scheme breaks down when her buffer is empty at any point before completing the codeword transmission.   Hence, the question becomes:  how long must Alice collect packets during the first stage so as to guarantee (with high probability) that she will not run out of packets before the completion of codeword transmission during the second stage?

First in Lemma~1, we show that Alice can collect and store $\mathcal{O}\left(\sqrt{\lambda T}\right)$ packets in an interval of length $T$ while keeping Willie’s sum of error probabilities $\mathbb{P}_{FA} + \mathbb{P}_{MD} > 1-\epsilon$ for any $0<\epsilon<1$; conversely, if she collects more, she will be detected by Willie with high probability. Building on Lemma~1, we prove (Theorem~2) that, using our two-stage covert communications approach, Alice can reliably transmit $\mathcal{O}(\lambda T)$ covert bits to Bob within the time period $T$ while keeping Willie’s sum of error probabilities bounded as $\mathbb{P}_{FA} + \mathbb{P}_{MD} > 1-\epsilon$ for any $0<\epsilon<1$. 

The remainder of the paper is organized as follows. In Section~\ref{modcons}, we present the system model and the metrics employed for the two scenarios of interest. Then, we provide constructions and their analysis for the two covert communication scenarios in Sections~\ref{theorem1} and \ref{theorem2}. 

\section{System Model and Metrics}
\label{modcons}

\subsection{System Model}
Suppose that Jack transmit packets to Steve on the interval $[0,T]$. We model the packet transmission times by a Poisson point process with parameter $\lambda$. Hence, the average number of packets that is sent from Jack to Steve in a duration of time $T$ is $\lambda T$.  Assume Alice wishes to use the channel to communicate with Bob without being detected by warden Willie, whom is observing the channel from Alice to Bob and attempting to detect any covert transmissions. 

In Scenario~1, which is analyzed in Section III, we assume that Willie does not have access to the contents of the packets and therefore is not able to perform authentication and determine whether a packet is from Jack or not.  We assume that Alice, with knowledge of $\lambda$, is allowed to use the channel only by inserting and transmitting her own packets to Bob, and Bob is able to authenticate, receive and remove the covert packets. Therefore Steve does not observe the covert packets.   We consider the rate $\Delta$ at which Alice can insert packets while remaining covert.   Willie knows that the legitimate communication is modeled by a Poisson point process with parameter $\lambda$, and he knows all of the characteristics of Alice's insertion scheme (rate, method of insertion, etc.) 


\begin{figure}
\begin{center}
\includegraphics[width=\textwidth/2,height=\textheight,keepaspectratio]{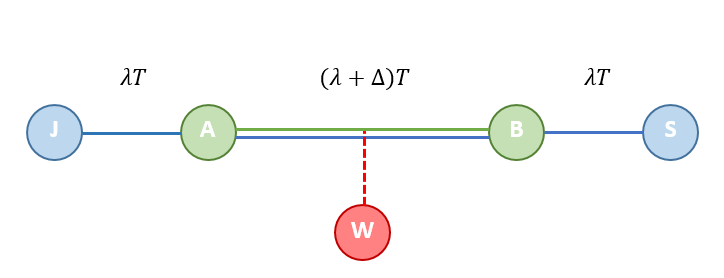}
\end{center}
 \caption{System configuration for Scenario~1: Alice adds her packets to the channel between Jack and Steve to communicate covertly with Bob. The average number of packets from Jack in a duration of time $T$ is $\lambda T$, and the average number of total packets observed by either Bob and Willie in $T$ is $\left(\lambda+ \Delta\right)T$. The blue color shows the legitimate communication and the green shows the covert communication.}
 \label{fig:SysMod}
 \end{figure}

In Scenario 2, which is analyzed in Section IV, we assume that Willie is able to access packet contents and hence authenticate whether a packet is coming from Jack.  Hence, Alice is not able to insert packets into the channel.  However, we allow Alice to buffer packets and release them when she desires into the channel, thereby enabling covert communication through packet timing control.  The problem is made non-trivial by assuming that Bob has access to the resulting packet stream only through an $M/M/1$ queue with service rate $\mu >  \lambda$; in other words, the service times of the packets are independently and identically distributed (i.i.d) random variables according to an exponential random variable with average $1/\mu$.   Again, Willie knows that the legitimate communication is modeled by a Poisson point process with parameter $\lambda$, and he knows all of the characteristics of Alice's packet buffering and release scheme except a secret key that is pre-shared between Alice and Bob. Fig.~\ref{fig:SysMod2} depicts the system model for this scenario.  In this scenario, we consider the rate at which Alice can reliably transmit information to Bob without detection by Willie.


 \begin{figure}
 \begin{center}
\includegraphics[width=\textwidth/2,height=\textheight,keepaspectratio]{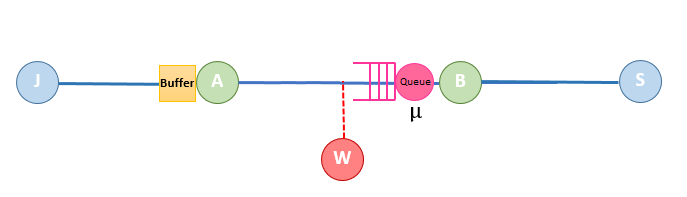}
 \end{center}
  \caption{System configuration for Scenario~2: 
Alice is able to buffer packets in order to alter packet timings.}
  \label{fig:SysMod2}
  \end{figure}

\subsection{Hypothesis Testing}



Willie is faced with a binary hypothesis test:  the null hypothesis ($H_0$) corresponds to the case that Alice is not transmitting, and the alternative hypothesis $H_1$ corresponds to the case that Alice is transmitting. We denote the distribution Willie observes under $H_1$ by $\mathbb{P}_{1}$, and the distribution Willie observes under $H_0$ by $\mathbb{P}_0$. Also, we denote by $\mathbb{P}_{FA}$ the probability of rejecting $H_0$ when it is true (type I error or false alarm), and $\mathbb{P}_{MD}$ the probability of rejecting $H_1$ when it is true (type II error or missed detection). We assume that Willie uses classical hypothesis testing with equal prior probabilities and seeks to minimize $\mathbb{P}_{FA} + \mathbb{P}_{MD}$; the generalization to arbitrary prior probabilities is straightforward~\cite{bash_jsac2013}.  

\subsection{Covertness}
Alice's transmission is {\em covert} if and only if she can lower bound Willie's sum of probabilities of error ($\mathbb{P}_{FA}+\mathbb{P}_{MD}$) by $1-\epsilon$ for any $\epsilon>0$~\cite{bash_jsac2013}.

\subsection{Reliability}

A transmission scheme is {\em reliable} if and only if the probability that a codeword transmission from Alice to Bob is unsuccessful is upper bounded by $\zeta$ for any $\zeta>0$.  Note that this metric applies only in Scenario 2.

\section{Covert Communication via Packet Insertion (Scenario 1)}
 
\label{theorem1}

In this section, we consider Scenario 1:  Willie cannot authenticate packets to see whether they are from Jack or Alice, and Alice is only allowed to send information to Bob by adding packets to the channel. 


\begin{thm}
Consider Scenario 1.  Alice can covertly insert $\mathcal{O}(\sqrt{\lambda T}) $ packets in the period of time $T$. Conversely, if Alice attempts to insert $\omega\left(\sqrt{\lambda T}\right)$ packets in the period of time $T$, there exists a detector that Willie can use to detect her with arbitrarily low sum of error probabilities $\mathbb{P}_{FA} + \mathbb{P}_{MD}$.

\end{thm}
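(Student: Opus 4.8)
The plan is to treat the two directions separately, reducing the achievability to a relative-entropy bound between the two packet-timing laws Willie sees and the converse to an explicit counting test. For achievability, let Alice insert her packets according to a Poisson process of rate $\Delta$ independent of Jack's, with $\Delta$ a constant multiple of $\sqrt{\lambda/T}$ so that she inserts $\Delta T=\Theta(\sqrt{\lambda T})$ packets on average; the constant will be calibrated to $\epsilon$. Since the superposition of independent Poisson processes is Poisson, Willie observes a rate-$\lambda$ process on $[0,T]$ under $H_0$ (law $\mathbb{P}_0$) and a rate-$(\lambda+\Delta)$ process under $H_1$ (law $\mathbb{P}_1$). For every test $\mathbb{P}_{FA}+\mathbb{P}_{MD}\ge 1-\mathcal{V}(\mathbb{P}_0,\mathbb{P}_1)$, with equality for the likelihood-ratio test, so it suffices to keep the total variation distance $\mathcal{V}(\mathbb{P}_0,\mathbb{P}_1)\le\epsilon$, and by Pinsker's inequality it is enough to bound $D(\mathbb{P}_1\|\mathbb{P}_0)$. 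The crucial observation is that, conditioned on the total count $N$ of points in $[0,T]$, those points are i.i.d.\ uniform on $[0,T]$ under \emph{both} hypotheses, so $d\mathbb{P}_1/d\mathbb{P}_0$ depends only on $N$ and $D(\mathbb{P}_1\|\mathbb{P}_0)$ equals the relative entropy between $\mathrm{Poisson}\big((\lambda+\Delta)T\big)$ and $\mathrm{Poisson}(\lambda T)$, namely $(\lambda+\Delta)T\ln\big(1+\Delta/\lambda\big)-\Delta T\le \Delta^2 T/\lambda$ by $\ln(1+x)\le x$. With $\Delta=c\sqrt{\lambda/T}$ this is $c^2$, so choosing $c=\sqrt{2}\,\epsilon$ gives $D(\mathbb{P}_1\|\mathbb{P}_0)\le 2\epsilon^2$ and $\mathcal{V}(\mathbb{P}_0,\mathbb{P}_1)\le\epsilon$. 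The number of packets Alice inserts is $\mathrm{Poisson}(\Delta T)$ with mean $\sqrt{2}\,\epsilon\sqrt{\lambda T}=\mathcal{O}(\sqrt{\lambda T})$, which by Poisson concentration is $\mathcal{O}(\sqrt{\lambda T})$ with probability tending to $1$; alternatively Alice may insert a deterministic $n=\lfloor\sqrt{2}\,\epsilon\sqrt{\lambda T}\rfloor$ packets at i.i.d.\ uniform times, in which case $N=N_J+n$ is a shifted Poisson and the same kind of estimate gives $D(\mathbb{P}_1\|\mathbb{P}_0)=\Theta(n^2/(\lambda T))$.

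For the converse, suppose Alice inserts $n=\omega(\sqrt{\lambda T})$ packets by any mechanism, and let Willie count the number $N$ of packets in $[0,T]$ and declare $H_1$ exactly when $N\ge\lambda T+n/2$. Under $H_0$, $N=N_J\sim\mathrm{Poisson}(\lambda T)$, so Chebyshev's inequality gives $\mathbb{P}_{FA}=\mathbb{P}\big(N_J-\lambda T\ge n/2\big)\le 4\lambda T/n^2$; under $H_1$, $N=N_J+n$, so $\mathbb{P}_{MD}=\mathbb{P}\big(N_J-\lambda T\le -n/2\big)\le 4\lambda T/n^2$. Hence $\mathbb{P}_{FA}+\mathbb{P}_{MD}\le 8\lambda T/n^2$, which tends to $0$ because $n=\omega(\sqrt{\lambda T})$ forces $n^2/(\lambda T)\to\infty$; if the number of inserted packets is random one replaces $n$ by a high-probability lower bound on it. This exhibits the desired detector.

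The only delicate step is the achievability relative-entropy computation: one must justify that $N$ is a sufficient statistic for Willie -- equivalently that $d\mathbb{P}_1/d\mathbb{P}_0$ on the space of point patterns is a function of $N$ alone -- which rests on the conditional-uniformity property of the Poisson process and should be spelled out rather than asserted; and, if one prefers that Alice insert an exact rather than a Poisson number of packets, the shifted-Poisson relative entropy $\mathbb{E}\big[\ln\big((J+n)!/J!\big)\big]-n\ln(\lambda T)$ with $J\sim\mathrm{Poisson}(\lambda T)$ must be expanded carefully to confirm it is $\Theta(n^2/(\lambda T))$. Everything else reduces to routine Chebyshev or Poisson-tail estimates.
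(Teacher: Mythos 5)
Your proposal is correct and follows essentially the same route as the paper: both establish that the packet count $N$ is a sufficient statistic via the conditional-uniformity property of the Poisson process, bound Willie's error sum through a Pinsker-type inequality on the relative entropy of two Poisson laws (you use $D(\mathbb{P}_1\|\mathbb{P}_0)$ with $\ln(1+x)\le x$ while the paper bounds $D(\mathbb{P}_0\|\mathbb{P}_1)$ with $\ln(1+u)\ge u-u^2/2$, but both yield the same $\mathcal{O}(\Delta^2 T/\lambda)$ scaling and the same choice $\Delta T=\sqrt{2}\,\epsilon\sqrt{\lambda T}$), and prove the converse with the same counting detector and Chebyshev argument, differing only in that the paper fixes the threshold offset at $U=\sqrt{\lambda T/\alpha}$ whereas you tie it to $n/2$.
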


\begin{proof}

{\it (Achievability)} 

\textbf{Construction}: Alice generates a Poisson point process with parameter $\Delta$ and based on this, adds her covert packets to the channel between Jack and Steve. Bob collects and removes the packets generated by Alice.


\textbf{Analysis}: First we show that the number of packets observed during the time period of length $T$ is a sufficient statistic by which Willie can perform the optimal hypothesis test to decide whether Alice is transmitting or not. The observation of the Poisson point process at Willie is characterized by the collection of inter-packet delays. Let $\bar{x}=\left(x_1,x_2,\cdots \right)$ be this observation vector, where $x_i$ is the delay between the arrival time of packet $i$ and packet $i+1$.  The optimal hypothesis test is based on a likelihood ratio (LRT) between the null hypothesis $H_0$ (Alice is not transmitting) and the alternative hypothesis $H_1$ (Alice is transmitting):
\begin{align}
\nonumber \Lambda(\bar{x}) &= \frac{f(\bar{x}|H_1)}{f(\bar{x}|H_0)}= \frac{\sum_{n_1} f(\bar{x} | H_1, n_1) \mathbb{P}_{1}(n_1) }{ \sum_{n_0} f(\bar{x} |  \nonumber  H_0, n_0) \mathbb{P}_{0}(n_0)}
\end{align}

\noindent where $\mathbb{P}_0$ is the probability distribution function for the number of packets that Willie observes under the null hypothesis $H_0$ (Alice is not transmitting) corresponding to a Poisson point process with rate $\lambda$, and $\mathbb{P}_1 $ is the probability distribution function for the number of packets that Willie observes under hypothesis $H_1$ (Alice is transmitting) corresponding to a Poisson point process with rate $\lambda+\Delta$.  Since the processes under $H_0$ and $H_1$ are Poisson, $f\left(\bar{x} \left|  H_0, n\right.\right) = f(\bar{x} |   H_1, n)$ for all $n$, and
\begin{align}
           \Lambda(\bar{x}) &= \frac{\sum_{n_1} f(\bar{x} | n_1) \mathbb{P}_{1}(n_1)}{  \sum_{n_0} f(\bar{x} | n_0) \mathbb{P}_{0}(n_0)}
\end{align}

\noindent  Suppose that $\bar{x}$ has $n$ non-zero entries for a given observation. Since $f(\bar{x} | n_1)=0$ when $n_1 \neq n$,
\begin{align}
\Lambda(\bar{x}) &= \frac{f(\bar{x} | n) \mathbb{P}_{1} (n)}  {f(\bar{x} | n) \mathbb{P}_{0}(n)}= \frac{\mathbb{P}_{1}(n)}{  \mathbb{P}_{0}(n)}
\end{align}

\noindent Therefore, the number of packets that Willie observes during time interval $T$ is a sufficient statistic for Willie's decision. 

Now, Willie applies the optimal hypothesis test on the number of packets during time $T$. Then \cite{bash_jsac2013}
\begin{align} \label{eq:th10e0001} \mathbb{P}_{FA}+\mathbb{P}_{MD} \geq 1-     \sqrt{\frac{1}{2} \mathcal{D}(\mathbb{P}_0 || \mathbb{P}_1)} \; 
\end{align}
\begin{align} \label{eq:th10e0002} \mathbb{P}_{FA}+\mathbb{P}_{MD} \geq 1-     \sqrt{\frac{1}{2} \mathcal{D}(\mathbb{P}_1 || \mathbb{P}_0)} \; 
\end{align}
\noindent where $\mathbb{E}[\cdot]$ denotes the expected value, and $\mathcal{D}(\mathbb{P}_0 || \mathbb{P}_1)$ is the relative entropy between $\mathbb{P}_0$ and $\mathbb{P}_1$. We next show how Alice can lower bound the sum of average error probabilities by upper bounding $  \sqrt{\frac{1}{2} \mathcal{D}(\mathbb{P}_0 || \mathbb{P}_1)}$. Consider the following fact (proved in the Appendix):

\begin{fact} \label{f1}
For two Poisson distributions $\mathbb{P}_{\lambda_1}(n)$, $\mathbb{P}_{\lambda_2}(n)$:
\begin{align}
\mathcal{D}(\mathbb{P}_{\lambda_1}(n)||\mathbb{P}_{\lambda_2}(n)) = \lambda_2-\lambda_1+\lambda_1 \log{\lambda_1/\lambda_2}
\end{align}
\end{fact}
By Fact.~\ref{f1}, for the given $ \mathbb{P}_1$ and $ \mathbb{P}_0$ the relative entropy is:
\begin{align}\label{eq:thm1d}
\mathcal{D}(\mathbb{P}_0 || \mathbb{P}_1) =  \Delta \cdot T - \lambda T \ln{\left( 1+\frac{\Delta}{\lambda}\right)}
\end{align}
\noindent We prove in the Appendix that:
\begin{align}\label{eq:ineq0}
\ln(1+u) \geq u-\frac{u^2}{2} \text{ for } u\geq 0,
\end{align}
\noindent Applying~\eqref{eq:ineq0} on~\eqref{eq:thm1d} yields:
\begin{equation}\label{eq:thm1d2}
\mathcal{D}(\mathbb{P}_0 || \mathbb{P}_1) \leq \frac{\Delta^2}{\lambda}T.
\end{equation}
\noindent Suppose Alice sets 
\begin{align} \label{eq:Delta}
\Delta = \sqrt{\frac{2\lambda}{T}}\epsilon
\end{align}
\noindent where $0<\epsilon<1$ is a constant. Then,
\begin{align}
\nonumber   \sqrt{\frac{1}{2} \mathcal{D}(\mathbb{P}_0 || \mathbb{P}_1)} \;   &\leq \epsilon
\end{align}
\noindent and $\mathbb{E}[\mathbb{P}_{FA}+\mathbb{P}_{MD}] \geq 1-\epsilon$ as long as $ \Delta \cdot T = \mathcal{O} \left(\sqrt{\lambda T}\right) $. 

{\it (Converse)} To establish the converse, we provide an explicit detector for Willie that is sufficient to limit Alice's throughput across all potential transmission schemes (i.e. not necessarily insertion according to a Poisson point process). Suppose that Willie observes a time interval of length $T$ and wishes to detect whether Alice is transmitting or not. Since he knows that the packet arrival process for the link between Jack and Steve is a Poisson point process with parameter $\lambda$, he knows the expected number of packets that he will observe. Therefore, he counts the number of packets $S$ in this duration of time and performs a hypothesis test by setting a threshold $U$ and comparing $S$ with $\lambda T+U$. If $S<\lambda T+U$, Willie accepts $H_0$; otherwise, he accepts $H_1$. Consider $\mathbb{P}_{FA}$
\begin{align}
\label{eq:PFAupperbound1} \mathbb{P}_{FA} = \mathbb{P}\left(S>\lambda T + U | H_0 \right)=\mathbb{P}\left(S-\lambda T > U | H_0 \right)
\leq\mathbb{P}\left(|S-\lambda T| > U | H_0 \right)
\end{align}
\noindent When $H_0$ is true, Willie observes a Poisson point process with parameter $\lambda$; hence,
\begin{align}
\label{eq:th1con1}\mathbb{E}\left[S\left|H_0\right.\right]&=\lambda T\\
\label{eq:th1con2}\mathrm{Var}\left[S\left|H_0\right.\right]&={\lambda T}
\end{align}
\noindent  Therefore, applying Chebyshev's inequality on~\eqref{eq:PFAupperbound1} yields $\mathbb{P}_{FA} \leq \frac{\lambda T}{U^2 }$. Thus, Willie can achieve $\nonumber \mathbb{P}_{FA} \leq \alpha$ for any $0<\alpha<1$ if he sets $U=\sqrt{\frac{\lambda T}{ \alpha}}$.
\noindent Next, we will show that if Alice transmits $ \Delta \cdot T=\omega\left(\sqrt{\lambda T}\right)$ covert packets, she will be detected by Willie with high probability. Consider $\mathbb{P}_{MD}$
\begin{align}
 \nonumber \mathbb{P}_{MD}=\mathbb{P}\left(S\leq \lambda T + U | H_1\right) &=\mathbb{P}\left(S - \left(\lambda + \Delta \right) T \leq  U - \Delta \cdot T | H_1\right)\\
\nonumber &=\mathbb{P}\left(- \left(S - \left(\lambda + \Delta \right) T\right) \geq  -\left(U - \Delta \cdot T\right) | H_1\right)\\
\label{eq:PMDupperbound1} &\leq \mathbb{P}\left( \left|S - \left(\lambda + \Delta \right) T\right| \geq  \left|U - \Delta \cdot T\right| | H_1\right)
\end{align}
\noindent When $H_1$ is true, since Willies observes a Poisson point process with parameter $\lambda+\Delta$,
\begin{align}
\label{eq:th1con3}\mathbb{E}\left[S\left|H_1\right.\right]&=\left(\lambda + \Delta\right) T\\
\label{eq:th1con4}\mathrm{Var}\left[S\left|H_1\right.\right]&=\left(\lambda + \Delta\right) T
\end{align}
\noindent Therefore, applying Chebyshev's inequality on~\eqref{eq:PMDupperbound1} yields $ \mathbb{P}_{MD}\leq \frac{\left(\lambda + \Delta\right) T}{\left( U- \Delta \cdot T \right)^2}$. Since $U=\sqrt{\frac{\lambda T}{ \alpha}}$, if Alice sets $ \Delta \cdot T=\omega\left(\sqrt{\lambda T}\right)$, Willie can achieve $\mathbb{P}_{MD} < \beta$ for any $0<\beta<1$. Combined with the results for probability of false alarm above, if Alice sets $\Delta.T=\omega\left(\sqrt{\lambda T}\right)$, Willie can choose a $U=\sqrt{\frac{\lambda T}{ \alpha}}$ to achieve any (small) $\alpha>0$ and $\beta>0$ desired.
\end{proof}


\section{Covert Timing Channel Using Inter-Packet Delays (Scenario 2)}
\label{theorem2}
In this section, we consider Scenario 2:  Willie can authenticate packets to determine whether or not they were generated by the legitimate transmitter Jack. Therefore, Alice cannot insert packets into the channel; rather, we assume that Alice is able to buffer packets and release them when she desires; hence, she can encode information in the inter-packet delays by using a secret codebook shared with Bob. 

In the construction below, each of Alice's codewords will consist of a sequence of inter-packet delays to be employed to convey the corresponding message.  But this immediately presents a problem for Alice, as she must have a packet in her buffer to transmit at the appropriate time.  To clarify this issue, suppose Alice were to attempt to start codeword transmission immediately.  Suppose that the first letter of Alice's codeword for the current message is $10\mu s$, and thus she has to apply this delay between the first and second packet. To send this letter, Alice receives the first packet from Jack and immediately transmits it to Bob. Then, she waits for the second packet. If the second packet arrives earlier than $10\mu s$, she stores the packet and transmits it $10\mu s$ after the first transmission. But, if the second packet arrives from Jack after the $10\mu s$ interval, she cannot accurately transmit the codeword.  Hence, the codeword failure rate would be quite high with such a scheme.

This suggests that Alice must build up some number of packets in her buffer before starting codeword transmission, and, in fact, this is the scheme that we employ.  In particular, Alice will employ a two-phase system.  In the first phase, she will (slightly) slow down the transmission of packets from Jack to Steve so as to build up a backlog of packets in her buffer.  Then, during the codeword transmission phase, she will release packets from her buffer with the inter-packet delays prescribed by the codeword corresponding to the message, while continuing to buffer arriving packets from Jack.  Two questions dictate the resulting throughput of the scheme:  (1)  Phase I: how much can Alice slow down the packet stream from Jack to Steve without it being detected by warden Willie?; (2) Phase II:  how many packets must Alice accumulate in her buffer by the start of Phase II so as to, with high probability, have a packet in her buffer at all of the times required by the codeword?
To answer the first question, we calculate the number of packets that Alice can covertly collect by slowing down the packet stream from Jack to Steve in a duration of time $T$ and present the results in Lemma~1. Then, in Theorem~2, we answer the second question to present an achievability result for the number of bits that Alice can reliably and covertly transmit to Bob through packet timings. 

\begin{lem}
If Alice can buffer packets on the link from Jack to Steve, she can covertly buffer $ \mathcal{O}(\sqrt{\lambda T}) $ packets in a period of time of length $T$. Conversely, if Alice buffers $\omega\left(\sqrt{\lambda T}\right) $ packets in the period of time $T$, there exists a detector that Willie can use to detect such buffering with arbitrarily low sum of error probabilities $\mathbb{P}_{FA} + \mathbb{P}_{MD}$.

\end{lem}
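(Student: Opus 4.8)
The plan is to follow the proof of Theorem~1 almost verbatim, with the sign of Alice's perturbation reversed: rather than \emph{adding} a Poisson stream of rate $\Delta$, Alice will \emph{thin} Jack's stream so that Willie sees a Poisson process of \emph{reduced} rate $\lambda-\Delta$, diverting the removed packets into her buffer. Concretely, for achievability I would have Alice, upon each arrival from Jack, independently either forward the packet immediately (with probability $1-\Delta/\lambda$) or place it in her buffer (with probability $\Delta/\lambda$), for a parameter $0<\Delta<\lambda$ to be chosen. By the thinning property of the Poisson process, the stream Willie observes is then Poisson of rate $\lambda-\Delta$ and the buffered stream is an independent Poisson process of rate $\Delta$, so the expected backlog accumulated in $[0,T]$ is $\Delta T$.

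The analysis then parallels Theorem~1. Since the observed process is Poisson under both $H_0$ (rate $\lambda$) and $H_1$ (rate $\lambda-\Delta$), the conditional density of the inter-packet delays given the observed packet count is the same uniform-order-statistics density under both hypotheses, so the number of packets $S$ Willie sees in $[0,T]$ is again a sufficient statistic. Applying~\eqref{eq:th10e0001} with Fact~\ref{f1} (now with means $\lambda T$ and $(\lambda-\Delta)T$) gives \[ \mathcal{D}(\mathbb{P}_0\|\mathbb{P}_1) = -\Delta T-\lambda T\ln\!\left(1-\tfrac{\Delta}{\lambda}\right). \] In place of~\eqref{eq:ineq0} I would use the elementary companion bound $-\ln(1-u)\le u+u^2$ valid for $0\le u\le\tfrac12$ (a one-line power-series estimate), which yields $\mathcal{D}(\mathbb{P}_0\|\mathbb{P}_1)\le \Delta^2 T/\lambda$, exactly as in~\eqref{eq:thm1d2}. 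Setting $\Delta=\sqrt{2\lambda/T}\,\epsilon$ --- which satisfies $0<\Delta/\lambda\le\tfrac12$ for all large $T$ --- then forces $\sqrt{\tfrac12\,\mathcal{D}(\mathbb{P}_0\|\mathbb{P}_1)}\le\epsilon$, hence $\mathbb{P}_{FA}+\mathbb{P}_{MD}\ge 1-\epsilon$ by~\eqref{eq:th10e0001}, while the (expected) number of buffered packets is $\Delta T=\sqrt{2\lambda T}\,\epsilon=\mathcal{O}(\sqrt{\lambda T})$.

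For the converse I would hand Willie the same counting detector as in Theorem~1, but with the threshold on the low side: he counts the packets $S$ he observes in $[0,T]$ and accepts $H_0$ iff $S\ge\lambda T-U$. Under $H_0$ the observed process is Poisson with $\mathbb{E}[S]=\mathrm{Var}[S]=\lambda T$, so Chebyshev gives $\mathbb{P}_{FA}=\mathbb{P}(S<\lambda T-U\mid H_0)\le\lambda T/U^2$, and $U=\sqrt{\lambda T/\alpha}$ makes $\mathbb{P}_{FA}\le\alpha$ for any $\alpha>0$. For $\mathbb{P}_{MD}$ the one extra ingredient --- which holds for \emph{any} buffer-and-release strategy, not just Poisson thinning --- is the bookkeeping inequality $S\le A-N$, where $A$ is the number of packets Jack generated in $[0,T]$ (Poisson, mean and variance $\lambda T$) and $N$ is the backlog in Alice's buffer at time $T$: every packet Jack sent has either passed Alice (counted in $S$, or sitting in the downstream queue) or is still in her buffer (counted in $N$). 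Hence if Alice buffers $N=\omega(\sqrt{\lambda T})$ packets, $\mathbb{P}_{MD}=\mathbb{P}(S\ge\lambda T-U\mid H_1)\le\mathbb{P}(A-\lambda T\ge N-U)\le\lambda T/(N-U)^2$, and since $N-U=\omega(\sqrt{\lambda T})$ this is $o(1)$, so Willie drives $\mathbb{P}_{MD}$ below any $\beta>0$; combined with the $\mathbb{P}_{FA}$ bound this gives the claim.

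I expect the main obstacle to be the converse, and specifically the clean statement of the bookkeeping inequality $S\le A-N$ uniformly over all --- possibly adaptive --- buffering schemes and over whatever propagation/queueing delays the model permits between Jack, Alice, and Willie; once that accounting is pinned down the Chebyshev estimates are routine. The only other point needing a moment's care is verifying $-\ln(1-u)\le u+u^2$ on $[0,\tfrac12]$ and checking that the chosen $\Delta=\sqrt{2\lambda/T}\,\epsilon$ stays in $(0,\lambda/2]$ for all sufficiently large $T$, so that both the thinning construction and the logarithm bound are valid.
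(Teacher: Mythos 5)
Your proof is correct and follows the same overall approach as the paper's: for achievability, present Willie with a Poisson process of rate $\lambda-\Delta$, reduce covertness to bounding a Poisson KL divergence via Fact~\ref{f1} and an elementary logarithm inequality, and pick $\Delta=\Theta(\sqrt{\lambda/T})$; for the converse, hand Willie the Theorem~1 counting detector with the threshold flipped to the low side. The only cosmetic differences are that the paper realizes the rate reduction by scaling inter-packet times rather than by Bernoulli thinning (both yield the identical observed distribution, so the hypothesis test is unaffected), and the paper works with $\mathcal{D}(\mathbb{P}_1\|\mathbb{P}_0)$ together with its already-proved bound $\ln(1+u)\ge u-u^2/2$ applied at $u=\Delta/(\lambda-\Delta)$, whereas you use $\mathcal{D}(\mathbb{P}_0\|\mathbb{P}_1)$ with the companion bound $-\ln(1-u)\le u+u^2$; both yield $\mathcal{D}\le\Delta^2T/\lambda(1+o(1))$. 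Your explicit bookkeeping identity $S\le A-N$ in the converse is a welcome precision the paper leaves implicit behind ``follows that of Theorem~1 with minor modifications,'' and it correctly makes the detector's guarantee scheme-agnostic rather than tied to Poisson thinning.
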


\begin{proof}
For the achievability result, Alice buffers packets by slowing down the incoming process of rate $\lambda$ to one of rate $\lambda-\Delta$ (i.e. slightly scaling up the packet timings) and buffering the packets of the (faster: of rate $\lambda$) incoming process that have yet to exit the buffer by the (slower: of rate $\lambda-\Delta$) outgoing process.
$\mathbb{P}_0$ is the probability distribution function for Willie's observation under the null hypothesis $H_0$ and corresponds to a Poisson point process with rate $\lambda$, and $\mathbb{P}_1 $ is the probability distribution function for observations under the hypothesis $H_1$ and corresponds to a Poisson point process with rate $\lambda-\Delta$. Similar to~\eqref{eq:thm1d}, by Fact~\ref{f1}: 
\begin{align}\label{eq:thm1d3}
 \nonumber \mathcal{D}(\mathbb{P}_1 || \mathbb{P}_0) &=  \Delta \cdot T - (\lambda - \Delta)T \ln{\left( 1+\frac{\Delta}{\lambda-\Delta}\right)}\\
\nonumber &\stackrel{(a)}{\leq}   \Delta \cdot T - (\lambda - \Delta)T \left(\frac{\Delta}{\lambda-\Delta}-\frac{1}{2}\left(\frac{\Delta}{\lambda-\Delta}\right)^2\right)\\
 &=   \Delta \cdot T - T \left({\Delta}-\frac{\Delta^2}{2(\lambda-\Delta)}\right)=  \frac{T \Delta^2}{2(\lambda-\Delta)}
\end{align}
\noindent where $(a)$ follows from inequality~\eqref{eq:ineq0}. If $ \Delta = \epsilon \sqrt{2\lambda/}$ , then $\mathcal{D}(\mathbb{P}_1 || \mathbb{P}_0)\leq 2 \epsilon^2$ as $T \to \infty$. Therefore, by~\eqref{eq:th10e0002} and~\eqref{eq:thm1d3}, Alice can achieve $\mathbb{P}_{FA}+\mathbb{P}_{MD} \geq 1-\epsilon$ as for any $0<\epsilon<1$ long as $ \Delta \cdot T= \epsilon \sqrt{2\lambda T} = \mathcal{O} \left(\sqrt{\lambda T}\right) $. The proof of the converse follows that of Theorem 1 with minor modifications.
\end{proof}
\begin{thm}
Consider Scenario 2.  By embedding information in the inter-packet delays, Alice can can covertly and reliably transmit $\mathcal{O}\left(\lambda T \right)$ bits to Bob in a time interval of length $T$. 
\end{thm}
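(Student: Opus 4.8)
The plan is to turn the two-phase scheme described before the theorem into a quantitative argument. Partition the interval $[0,T]$ into a buffering phase of length $T_1$ and a codeword transmission phase of length $T_2 = T - T_1$. During Phase~I, Alice slows the Poisson stream from rate $\lambda$ to rate $\lambda - \Delta$ and buffers the excess; by Lemma~1, choosing $\Delta = \epsilon\sqrt{2\lambda/T_1}$ keeps Willie's error sum above $1-\epsilon$, and the expected backlog built up is $\Delta T_1 = \Theta(\sqrt{\lambda T_1})$. To get a backlog that is $\Theta(\lambda T)$ — which is what we need to feed a codeword of length $\Theta(\lambda T)$ — we should instead take $T_1$ to be a constant fraction of $T$ and let $\Delta$ be a (small) constant; then the backlog is $\Delta T_1 = \Theta(\lambda T)$ while the relative entropy $\mathcal{D}(\mathbb{P}_1\|\mathbb{P}_0) \le \tfrac{T_1\Delta^2}{2(\lambda-\Delta)}$ stays bounded by a constant that we can drive below $2\epsilon^2$ by taking $\Delta$ small enough, so covertness of Phase~I still holds by~\eqref{eq:th10e0002}. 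I would also invoke a concentration bound (Chebyshev on the Poisson counts, as in the converse of Theorem~1) to say that with probability at least $1-\zeta/2$ the actual number of buffered packets, call it $N$, is at least $c\lambda T$ for a suitable constant $c$.

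Next, in Phase~II, Alice uses the capacity-achieving codebook for the timing channel through an $M/M/1$ queue from Anantharam--Verd\'u~\cite{verdubitsq}. The key covertness point is the one already flagged in the text: a codeword is, by construction, a realization of inter-packet delays drawn from the same Poisson$(\lambda)$ law that Jack would have produced, so the packet stream Willie sees during Phase~II is statistically identical to the overt process — Willie's observation has relative entropy zero in this phase, and Phase~II is covert for free. The reliability analysis then reduces to two independent failure events: (i) Alice's buffer empties before the codeword is complete, and (ii) the queue's channel noise causes a decoding error at Bob. For (ii), the Anantharam--Verd\'u result gives a positive constant capacity $C(\mu,\lambda) > 0$ (bits per packet, or per unit time) for the exponential-server timing channel, so transmitting at any rate below capacity makes the decoding-error probability at most $\zeta/4$ for $T$ large. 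For (i), note that during Phase~II packets keep arriving from Jack at rate $\lambda$ while Alice releases them according to the codeword, whose long-run release rate is also $\lambda$ (the codeword is drawn from the Poisson$(\lambda)$ input distribution). So the buffer content is a random walk with zero drift started from height $N \ge c\lambda T$; the probability it ever hits zero over a horizon of $\Theta(T)$ steps is controlled by a maximal inequality / reflection-type bound, and since the typical fluctuation of such a walk over $\Theta(\lambda T)$ steps is $\Theta(\sqrt{\lambda T}) = o(\lambda T)$, the starting height $c\lambda T$ dominates and the underflow probability is at most $\zeta/4$ for $T$ large. Union-bounding the three bad events gives total failure probability at most $\zeta$, establishing reliability; and since both phases are covert, the sum of Willie's error probabilities exceeds $1-\epsilon$.

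Combining, Alice transmits at a constant rate (bits per unit time) over a Phase~II of length $\Theta(T)$, hence $\Theta(\lambda T)$ bits total — matching the claim. The main obstacle, and the step I would spend the most care on, is the buffer-underflow analysis in Phase~II: one has to handle the fact that the release process is not i.i.d.\ across the codeword (the codeword symbols are correlated, and the relevant "input distribution'' that achieves capacity for the $M/M/1$ timing channel is itself a structured renewal-type process rather than plain Poisson), so the zero-drift random-walk picture needs to be justified carefully — e.g.\ by bounding the maximal backward excursion of the release counting process relative to the Poisson arrival counting process using a functional CLT or an Azuma/Doob maximal inequality on a suitable martingale — and one must check that the constants line up so that $c\lambda T$ really does swamp the $O(\sqrt{\lambda T})$ fluctuations uniformly over the whole phase. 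A secondary subtlety is making the two phases simultaneously covert: Willie sees the concatenation of a Poisson$(\lambda-\Delta)$ stretch and a Poisson$(\lambda)$ stretch, so strictly one should either argue the combined KL divergence is still $O(\epsilon^2)$ or, more cleanly, have Alice also randomize the phase boundary using the shared secret key so that Willie cannot align his test to it.
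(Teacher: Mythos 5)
Your architecture is the paper's: two phases, Phase~I slows the Jack-to-Steve stream to build a buffer, Phase~II runs the Anantharam--Verd\'u codebook with codewords drawn from the Poisson($\lambda$) law so that Phase~II is covert by construction, and buffer underflow is a zero-drift random-walk problem. But there is a real error in your Phase~I design, and it is one the paper's own Lemma~1 converse rules out. You assert that Alice must build a backlog of $\Theta(\lambda T)$ packets to feed a codeword of length $\Theta(\lambda T)$, and to get it you take $T_1 = \Theta(T)$ with $\Delta$ a constant, claiming that $\mathcal{D}(\mathbb{P}_1\|\mathbb{P}_0) \le T_1\Delta^2/(2(\lambda-\Delta))$ ``stays bounded by a constant.'' It does not: with $T_1 = \Theta(T)$ and fixed $\Delta$, this bound grows linearly in $T$, so for any fixed $\Delta>0$ Willie detects the slowdown with vanishing error as $T\to\infty$. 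If you instead scale $\Delta$ down with $T$ to keep the KL below $2\epsilon^2$, you are forced to $\Delta^2 T_1 = O(1)$, i.e.\ $\Delta T_1 = O(\sqrt{T_1}) = O(\sqrt{\lambda T})$, and the $\Theta(\lambda T)$ backlog evaporates. Lemma~1's converse says this is no coincidence: any buffering of $\omega(\sqrt{\lambda T})$ packets is detectable, so the target you set yourself is provably unattainable covertly.

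The fix is already buried in your own Phase~II analysis. You correctly observe that the buffer content during Phase~II is a zero-drift walk whose excursions over $\Theta(\lambda T)$ steps are only $\Theta(\sqrt{\lambda T})$; this means a starting backlog of $\Theta(\sqrt{\lambda T})$ with a sufficiently large leading constant already suffices, and Alice neither needs nor can covertly get $\Theta(\lambda T)$. The paper makes the ``sufficiently large constant'' quantitative: with $m = \epsilon\sqrt{2\lambda\psi T}$ buffered packets (exactly what Lemma~1 achieves over a buffering phase of length $\psi T$) and a transmission phase of length $(1-\psi)T$, the underflow probability of the symmetric walk is asymptotically $1-\operatorname{erf}\bigl(\tfrac{\epsilon}{2}\sqrt{\psi/(1-\psi)}\bigr)$, obtained by first conditioning on the number of walk steps via a Chernoff bound and then applying a reflection-type survival formula. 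Choosing the constant split ratio $\psi\in(0,1)$ close enough to $1$ drives this below $\zeta/2$; since $1-\psi$ remains a positive constant, Phase~II still has length $\Theta(T)$ and carries $C(\lambda)(1-\psi)T = \Theta(\lambda T)$ bits. The throughput you wanted is there, but the knob controlling underflow is the phase-split constant $\psi$, not the size of the backlog.
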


\begin{proof}

\textbf{Construction}: To establish covert communication over the timing channel, Alice and Bob share a secret key (codebook) to which Willie does not have access. To build a codebook, a set of $M$ independently generated codewords $\{C(W_l)\}_{l=1}^{l=M}$ are generated for messages $\{W_l\}_{l=1}^{l=M}$ according to realizations of a Poisson process with parameter $\lambda$ that mimics the overt traffic on the channel between Jack and Steve, where $M$ is defined later.  In particular, to generate a codeword $C(W_l)$, first a random variable $N$ is generated according to a Poisson distribution with mean $\lambda T$. Then, $N$ inter-packet delays are generated by placing $N$ points uniformly and independently on an interval of length $T$ \cite{verdubitsq} (see  Fig.~\ref{fig:codebook}). For each message transmission, Alice uses a new codebook to encode the message into a codeword. According to the codebook, each message corresponds to a codeword that is a series of inter-packet delays. To send a codeword, Alice applies the inter-packet delays to the packets that are being transmitted from Jack to Steve. 

\begin{figure}
\begin{center}
\includegraphics[width=\textwidth/2,height=\textheight,keepaspectratio]{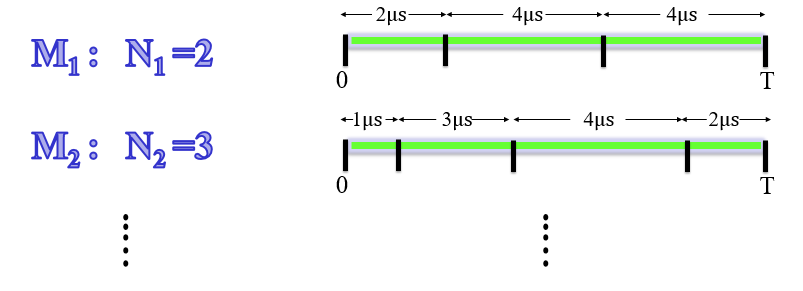}
\end{center}
 \caption{Codebook generation: Alice and Bob share a codebook (secret), which specifies the sequence of inter-packet delays corresponding to each message.  To generate each codeword, a number $N$ is generated according to the Poisson distribution with parameter $\lambda T$, and then $N$ points are placed uniformly and randomly on the time interval $[0,T]$.}
 \label{fig:codebook}
 \end{figure}

Per above, Alice's communication includes two phases: a buffering phase and a transmission phase. During the buffering phase of length $ \psi T$, where $0< \psi<1$ is a parameter to be defined later, Alice slows down the packet transmission process to $\lambda-\Delta$ in order to build up packets in her buffer.  In particular, Alice's purpose in the first phase is to buffer enough packets to ensure she will not run out of packets during the transmission phase, of length $T'=(1- \psi)T$, with high probability (see Fig.~\ref{fig:Twophased}).

\begin{figure}
\begin{center}
\includegraphics[width=\textwidth/2,height=\textheight,keepaspectratio]{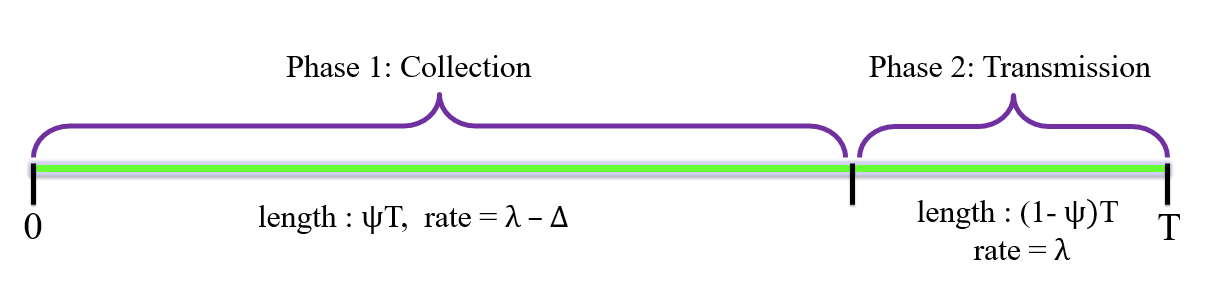}
\end{center}
 \caption{Two-phased construction: Alice divides the duration of time $T$ into two phases with lengths $ \psi T$ and $(1- \psi)T$. In the first phase, Alice slows down the transmission to $\lambda-\Delta$ and buffers the excess packets. In the next phase, she transmits packets to Bob at rate $\lambda$ according to the inter-packet delays in the codeword corresponding to the message to be transmitted.}
 \label{fig:Twophased}
 \end{figure}


\textbf{Analysis}: ({\em Covertness})  Suppose that Willie knows when each of the two phases start and end. According to Lemma~1, during the first phase, Alice can lower bound the sum of Willie's error probability by $1-\epsilon$ and collect $m=\Delta \cdot \psi T$ packets. Thus, in this phase, Alice's buffering is covert. By \eqref{eq:Delta},
\begin{align}
\label{eq:th2m} m= \epsilon \sqrt{2\lambda  \psi T}
\end{align}
During the second phase, the packet timings corresponding to the selected codeword are an instantiation of a Poisson point process with rate $\lambda$ and hence the traffic pattern is indistinguishable from the pattern that Willie expects on the link from Jack to Steve.  Hence, the scheme is covert.

({\em Reliability})  Next we show that Alice will have a reliable communication to Bob.  The notion of reliability is tied to two events. First, Bob should be able to decode the message with arbitrarily low probability of error, which follows by adapting the coding scheme of  \cite{verdubitsq}.   Second, Alice needs to avoid a ``failure'' event, in which Alice is unable to create the packet timings for the selected codeword because she has run out of packets in her buffer at some point in the codeword transmission process.   Therefore, in the transmission phase, we need to demonstrate that Alice can achieve $\mathbb{P}_f<\zeta$ for any $\zeta>0$, where $\mathbb{P}_f$ is probability of the event ``failure''.  

 Suppose that Alice has accumulated $m$ packets and wishes to communicate with Bob during an interval of length $T'$. There are two processes which impact the probability of running out of packets in Alice's buffer: the transmission of packets to Bob and the reception of packets from Jack (that are intended for Steve). Each of these processes is a Poisson point process with parameter $\lambda$. Consider the union of these two processes as a single process, termed ``T-R Process''.   The T-R Process is a Poisson point process with parameter $2\lambda$. Note that the probability of an event in this aggregate process corresponding to Alice buffering a packet is the same as the probability that it corresponds to her sending a packet to convey the next code symbol. Therefore, we toss a fair coin for each packet; if it is heads, Alice buffers a received packet; otherwise, she releases a packet from her buffer.

Let $K$ be the total number of received and transmitted packets during interval $T'$. 
\begin{align}
 \mathbb{P}_f&=  \mathbb{P}\left(\mathcal{F}\cap \{K> 4 \lambda T'\}\right) +\mathbb{P}\left(\mathcal{F}\cap \{K\leq 4 \label{eq:th2.1}\lambda T'\}\right) \\
\label{eq:th202} & \leq  \mathbb{P}\left(K\geq 4 \lambda T'\right) +\mathbb{P}\left(\mathcal{F}\cap \{K\leq 4 \lambda T'\}\right)
\end{align}
\noindent where $\mathcal{F}$ is the ``failure'' event, and~\eqref{eq:th2.1} is due to the law of total probability. Consider $\mathbb{P}\left(K> 4 \lambda T'\right)$. Using the Chernoff bound for the T-R Poisson point process yields:
\begin{align}
\mathbb{P}\left(K\geq 4 \lambda T'\right) &\leq \frac{e^{-2\lambda T' }\left( 2 \lambda T' e\right)^{4 \lambda T'}}{\left(4 \lambda T'\right)^{4 \lambda T'}}=\left(\frac{e}{4}\right)^{2\lambda T'}
\end{align}

\noindent Thus, for any $0<\zeta<1$, if $\lambda T'$ is large enough, 
\begin{align}
\label{th2.r1} \mathbb{P}\left(K\geq 4 \lambda T'\right)<\frac{\zeta}{2}
\end{align}
Now, consider $ \mathbb{P}\left(\mathcal{F}\cap \{K\leq 4 \lambda T'\}\right)$

\begin{align}
\label{eq:th2.e1} \mathbb{P}\left(\mathcal{F}\cap \{K\leq 4 \lambda T'\}\right)&=\mathbb{P}\left(\mathcal{F}\left| K\leq 4 \lambda T'\right.\right)\mathbb{P}\left(K\leq 4 \lambda T'\right)\\
\label{eq:th2.e2}&\leq\mathbb{P}\left(\mathcal{F}\left| K\leq 4 \lambda T'\right.\right)\\
\label{eq:th2.e21}&\leq   \mathbb{P}\left(\mathcal{F}|K=k'\right)|_{k'=4\lambda T'}
\end{align}

\noindent where $\mathbb{P}\left(\mathcal{F}|K=k'\right)$ is the probability of ``failure'' given the total number of received and transmitted packets in $T'$ is $k'$, and~\eqref{eq:th2.e21} follows from~\eqref{eq:th2.e2} because  $\mathbb{P}\left(\mathcal{F}|K=k'\right)$ is a monotonically increasing function of $k'$. This is true since with $m$ initial packets, if the total number of received and transmitted packets increases, the probability of ``failure'' increases.

To calculate $\mathbb{P}\left(\mathcal{F}|K=k'\right)$, we model the T-R Process by a one-dimensional random walk. Suppose that there is a random walker on the $z$ axis that is initially located at the origin $z=0$, and walks at each step from $z$ to either $z+1$ or $z-1$ with equal probability. If Alice transmits a packet, the walker's location is increased by $1$, and if Alice receives a packet, the walker's location is decrease by $1$. Given Alice has $m$ packets before starting this process, the event of ``failure'' in the T-R Process is the same as the event that, at some point, the random walker (that has started from origin $z=0$) hits state $z=m+1$.  Clearly, this can only happen if the walker steps from state $z=m$ to state $z=m+1$ at some point. Since the number of times that Alice either receives or transmits a packets in a duration of time $T'$ is the same as the number of steps of the walker $K$:
\begin{align}
\label{eq:th2e2.5}\mathbb{P}\left(\mathcal{F}|K=k'\right) &= 1 - \mathbb{P}_m(k')
\end{align}
\noindent where $\mathbb{P}_m\left(k'\right)$ is the probability that the walker's location remains $x\leq m$ during the first $k'$ steps. From (Eq. (12.13) in \cite{gut2009stopped}):
\begin{align}
\label{eq:survp} \lim\limits_{k'\to \infty} \mathbb{P}_m\left(k'\right)=\operatorname{erf}\left(\frac{m  }{\sqrt{2 k' }}\right)
\end{align}
\noindent where $\operatorname{erf}\left(\cdot\right)$ is the error function. By~\eqref{eq:th2m}, 
\begin{align}
\label{eq:th2e3} \lim\limits_{T\to \infty} \mathbb{P}_m(k')|_{k'=4\lambda T'} =\operatorname{erf}\left(\frac{m }{\sqrt{8\lambda T' }}\right)
  &=\operatorname{erf}\left(\frac{\epsilon \sqrt{2 \lambda  \psi T} }{ \sqrt{8 \lambda  (1- \psi)T}}\right)\\
\label{eq:th2e4}  &=\operatorname{erf}\left(\frac{\epsilon  }{2 }\sqrt{\frac{ \psi}{1- \psi}}\right)
\end{align}
\noindent By~\eqref{eq:th2.e21},~\eqref{eq:th2e2.5},~\eqref{eq:th2e4}
\begin{align}
\nonumber  \lim\limits_{T\to \infty} \mathbb{P}\left(\mathcal{F}\cap \{K\leq 4 \lambda T'\}\right)\leq   \lim\limits_{T\to \infty} \mathbb{P}\left(\mathcal{F}|K=4\lambda T'\right)=\left.1 -  \lim\limits_{T\to \infty} \mathbb{P}_m\left(k'\right)\right|_{k'=4\lambda T'}1 - \operatorname{erf}\left(\frac{\epsilon  }{2 }\sqrt{\frac{ \psi}{1- \psi}}\right).
\end{align}
\noindent Therefore, if 
\begin{align}
\label{eq:th2rho} \frac{ \psi}{1- \psi} =\left( \frac{2 }{\epsilon} \operatorname{erf}^{-1} \left(1-\frac{\zeta}{2 }\right)\right)^2 
\end{align}
where $\operatorname{erf}^{-1}\left(\cdot\right)$ is the inverse error function, 
\begin{align}
\label{eq:th2.r2}  \lim\limits_{T\to \infty} \mathbb{P}\left(\mathcal{F}\left|\{K\leq 4 \lambda T'\}\right.\right) \leq \frac{\zeta}{2}.
\end{align}

\noindent Therefore, by~\eqref{th2.r1},~\eqref{eq:th2.r2} Alice can achieve $\mathbb{P}_f < \zeta$ for any $0<\zeta<1$  and thus the transmission is reliable. 

({\em Number of Covert Bits}) Alice's rate of packet transmission in the second phase is $\lambda$ and the capacity of the queue for conveying information through inter-packet delays is \cite[Theorems 4 and 6]{verdubitsq}.
\begin{align}
\label{th2:cap}C(\lambda)=\lambda \log{\frac{\mu}{\lambda}} \text{  nats/sec}
\end{align}
\noindent Since the length of the second phase is $T'=T(1-\psi)$, the amount of information that Alice sends through inter-packet delays is $C(\lambda) T (1-\psi) = \mathcal{O} \left(\lambda T\right)$.

({\em Size of the Codebook}) Finally, to complete the construction, we find the number $M$ of codewords in the codebook. According to \cite[Definition 1]{verdubitsq}, the rate of the codebook is $\frac{\log M}{T'} = \frac{\log M}{T (1-\psi)}$. Since the capacity of the queue at output rate $\lambda$ is the maximum achievable rate at output rate $\lambda$ (see \cite[Definition 2]{verdubitsq}), by~\eqref{th2:cap}, the size of the codebook is
\begin{align}
M = e^{{\left(1- \psi\right) \lambda T \log{(\mu/\lambda)} }}
\end{align}
\noindent where $1-\psi = \left(\left( \frac{2 }{\epsilon} \operatorname{erf}^{-1} \left(1-{\zeta}\right)\right)^2 +1\right)^{-1}$.
 
\end{proof}
\section{Conclusion}
In this paper, we have considered two scenarios of covert communication on a timing channel.  In the first one, Alice is able to insert her own packets onto the channel but not modify the timing of other packets.  We established that, if the packet transmission between legitimate nodes is modeled as a Poisson point process with parameter $\lambda$, Alice can transmit $\mathcal{O}\left(\sqrt{\lambda T}\right)$ packets to Bob in a period of duration $T$ without being detected by warden Willie. In the converse, we showed that if Alice inserts more than $\mathcal{O}\left(\sqrt{\lambda T}\right)$ packets, she will be detected by Willie. Next, we analyzed the scenario where Alice cannot insert packets but instead is able to buffer packets and release them onto the channel at a later time.  We showed that if Alice waits $ \psi T$ before codeword transmission, where $ \psi = \Theta\left(1\right)$, and buffers packets, she is able to transmit $\mathcal{O}\left(\lambda T\right)$ bits to Bob reliably and covertly in time period $T$, given Alice and Bob share a secret of sufficient length (codebook). 
\bibliographystyle{ieeetr}

\begin{thebibliography}{10}

\bibitem{soltani2016allertonarxiv}
R.~Soltani, D.~Goeckel, D.~Towsley, and A.~Houmansadr, ``Covert communications
  on renewal packet channels.'' \url{https://arxiv.org/abs/1610.00368}, 2016.

\bibitem{talb2006}
J.~Talbot and D.~Welsh, {\em Complexity and Cryptography: An Introduction}.
\newblock Cambridge University Press, 2006.

\bibitem{bloch11pls}
M.~Bloch and J.~Barros, {\em Physical-Layer Security}.
\newblock Cambridge, UK: Cambridge University Press, 2011.

\bibitem{ker07pool}
A.~D. Ker, ``Batch steganography and pooled steganalysis,'' vol.~4437 of {\em
  Lecture Notes in Computer Science}, pp.~265--281, Springer Berlin Heidelberg,
  2007.

\bibitem{bash12sqrtlawisit}
B.~A. Bash, D.~Goeckel, and D.~Towsley, ``Square root law for communication
  with low probability of detection on {AWGN} channels,'' in {\em Proc. {IEEE}
  Int. Symp. Inform. Theory (ISIT)}, (Cambridge, MA, USA), pp.~448--452, July
  2012.

\bibitem{bash_jsac2013}
B.~Bash, D.~Goeckel, and D.~Towsley, ``Limits of reliable communication with
  low probability of detection on {AWGN} channels,'' {\em Selected Areas in
  Communications, IEEE Journal on}, vol.~31, pp.~1921--1930, September 2013.

\bibitem{boulat_commmagg}
B.~A. Bash, D.~Goeckel, D.~Towsley, and S.~Guha, ``Hiding information in noise:
  Fundamental limits of covert wireless communication,'' {\em IEEE
  Communications Magazine: Special Issue on Wireless Physical Layer Security},
  Dec. 2015.
\newblock to appear.

\bibitem{commL}
D.~Goeckel, B.~Bash, S.~Guha, and D.~Towsley, ``Covert communications when the
  warden does not know the background noise power,'' {\em IEEE Communication
  Letters}, Aug 2015.
\newblock submitted.

\bibitem{bash13quantumlpdisit}
B.~A. Bash, S.~Guha, D.~Goeckel, and D.~Towsley, ``{Quantum Noise Limited
  Communication with Low Probability of Detection},'' in {\em Proc. {IEEE} Int.
  Symp. Inform. Theory (ISIT)}, (Istanbul, Turkey), pp.~1715--1719, July 2013.

\bibitem{bash14timing}
B.~Bash, D.~Goeckel, and D.~Towsley, ``{LPD communication when the warden does
  not know when},'' in {\em Information Theory (ISIT), 2014 IEEE International
  Symposium on}, pp.~606--610, 2014.

\bibitem{soltani14netlpdallerton}
R.~Soltani, B.~A. Bash, D.~Goeckel, S.~Guha, and D.~Towsley, ``Covert
  single-hop communication in a wireless network with distributed artificial
  noise generation,'' in {\em Proc. Conf.~on Commun., Control,
  Comp.~(Allerton)}, (Monticello, IL, USA), pp.~1078--1085, 2014.

\bibitem{tammy_asilomar2015}
T.~Sobers, D.~Goeckel, B.~Bash, S.~Guha, and D.~Towsley, ``Covert communication
  with the help of an uninformed jammer achieves positive rate,'' in {\em Proc.
  Asilomar Conf. on Signals, Systems, and Computers}, (Monterey, CA, USA),
  2015.

\bibitem{che13sqrtlawbscisit}
P.~H. Che, M.~Bakshi, and S.~Jaggi, ``Reliable deniable communication: Hiding
  messages in noise,'' in {\em Proc. {IEEE} Int. Symp. Inform. Theory (ISIT)},
  (Istanbul, Turkey), pp.~2945--2949, July 2013.

\bibitem{kadhe14sqrtlawmultipathisit}
S.~Kadhe, S.~Jaggi, M.~Bakshi, and A.~Sprintson, ``Reliable, deniable, and
  hidable communication over multipath networks,'' in {\em Proc. {IEEE} Int.
  Symp. Inform. Theory (ISIT)}, (Honolulu, HI, USA), pp.~611--615, July 2014.

\bibitem{hou14isit}
J.~Hou and G.~Kramer, ``Effective secrecy: Reliability, confusion and
  stealth,'' in {\em Proc. {IEEE} Int. Symp. Inform. Theory (ISIT)}, (Honolulu,
  HI, USA), pp.~601--605, July 2014.

\bibitem{bloch15covert-isit}
M.~Bloch, ``Covert communication over noisy memoryless channels: A
  resolvability perspective,'' in {\em Proc. {IEEE} Int. Symp. Inform. Theory
  (ISIT)}, (Hong Kong, China), 2015.

\bibitem{wang15covert-isit}
L.~Wang, G.~W. Wornell, and L.~Zhang, ``Limits of low-probability-of-detection
  communication over a discrete memoryless channel,'' in {\em Proc. {IEEE} Int.
  Symp. Inform. Theory (ISIT)}, (Hong Kong, China), 2015.

\bibitem{lee14posratecovert}
S.~Lee and R.~Baxley, ``{Achieving positive rate with undetectable
  communication over AWGN and Rayleigh channels},'' in {\em {Proc. {IEEE}
  Int.~Conf.~Commun.~(ICC)}}, pp.~780--785, June 2014.

\bibitem{jaggi_uncertain}
P.~H. Che, M.~Bakshi, C.~Chan, and S.~Jaggi, ``Reliable deniable communication
  with channel uncertainty,'' in {\em Proc. Information Theory Workshop (ITW)},
  2014.

\bibitem{Girling87}
C.~Girling, ``Covert channels in lan's,'' {\em Software Engineering, IEEE
  Transactions on}, vol.~SE-13, pp.~292--296, Feb 1987.

\bibitem{berk2005detection}
V.~Berk, A.~Giani, and G.~Cybenko, ``Detection of covert channel encoding in
  network packet delays (tech. rep. tr2005-536),'' {\em Department of Computer
  Science, Dartmouth College (November 2005)}, 2005.

\bibitem{shah2006keyboards}
G.~Shah, A.~Molina, M.~Blaze, {\em et~al.}, ``Keyboards and covert channels.,''
  in {\em USENIX Security}, 2006.

\bibitem{liu2009hide}
Y.~Liu, D.~Ghosal, F.~Armknecht, A.-R. Sadeghi, S.~Schulz, and
  S.~Katzenbeisser, ``{Hide and seek in time--robust covert timing channels},''
  in {\em Computer Security--ESORICS 2009}, pp.~120--135, Springer, 2009.

\bibitem{houmansadr2011coco}
A.~Houmansadr and N.~Borisov, ``{CoCo: coding-based covert timing channels for
  network flows},'' in {\em Information Hiding}, pp.~314--328, Springer, 2011.

\bibitem{verdubitsq}
V.~Anantharam and S.~Verdu, ``Bits through queues,'' {\em Information Theory,
  IEEE Transactions on}, vol.~42, no.~1, pp.~4--18, 1996.

\bibitem{dunn2009}
B.~P. Dunn, M.~Bloch, and J.~N. Laneman, ``Secure bits through queues,'' in
  {\em Networking and Information Theory, 2009. ITW 2009. IEEE Information
  Theory Workshop on}, pp.~37--41, IEEE, 2009.

\bibitem{millen1989finite}
J.~K. Millen, ``Finite-state noiseless covert channels,'' in {\em Computer
  Security Foundations Workshop II, 1989., Proceedings of the}, pp.~81--86,
  IEEE, 1989.

\bibitem{Anand1998inf}
A.~S. Bedekar and M.~Azizoglu, ``The information-theoretic capacity of
  discrete-time queues,'' {\em Information Theory, IEEE Transactions on},
  vol.~44, no.~2, pp.~446--461, 1998.

\bibitem{sekke2007capacity}
S.~H. Sellke, C.-C. Wang, N.~Shroff, and S.~Bagchi, ``Capacity bounds on timing
  channels with bounded service times,'' in {\em Information Theory, 2007. ISIT
  2007. IEEE International Symposium on}, pp.~981--985, IEEE, 2007.

\bibitem{sekke2009}
S.~Sellke, C.-C. Wang, S.~Bagchi, and N.~Shroff, ``Tcp/ip timing channels:
  Theory to implementation,'' in {\em INFOCOM 2009, IEEE}, pp.~2204--2212,
  April 2009.

\bibitem{Mosko92Capac}
I.~Moskowitz and A.~Miller, ``The channel capacity of a certain noisy timing
  channel,'' {\em Information Theory, IEEE Transactions on}, vol.~38,
  pp.~1339--1344, Jul 1992.

\bibitem{Moskowitz1994}
I.~Moskowitz and A.~Miller, ``Simple timing channels,'' in {\em Research in
  Security and Privacy, 1994. Proceedings., 1994 IEEE Computer Society
  Symposium on}, pp.~56--64, May 1994.

\bibitem{kiyavash2009J}
N.~Kiyavash, T.~Coleman, and M.~Rodrigues, ``Novel shaping and
  complexity-reduction techniques for approaching capacity over queuing timing
  channels,'' in {\em Communications, 2009. ICC '09. IEEE International
  Conference on}, pp.~1--5, June 2009.

\bibitem{kiyavash2009A}
N.~Kiyavash and T.~Coleman, ``Covert timing channels codes for communication
  over interactive traffic,'' in {\em Acoustics, Speech and Signal Processing,
  2009. ICASSP 2009. IEEE International Conference on}, pp.~1485--1488, April
  2009.

\bibitem{Arch2012}
R.~Archibald and D.~Ghosal, ``A covert timing channel based on fountain
  codes,'' in {\em Trust, Security and Privacy in Computing and Communications
  (TrustCom), 2012 IEEE 11th International Conference on}, pp.~970--977, June
  2012.

\bibitem{gianvecchio2007detecting}
S.~Gianvecchio and H.~Wang, ``Detecting covert timing channels: an
  entropy-based approach,'' in {\em Proceedings of the 14th ACM conference on
  Computer and communications security}, pp.~307--316, ACM, 2007.

\bibitem{gianvecchio2008model}
S.~Gianvecchio, H.~Wang, D.~Wijesekera, and S.~Jajodia, ``Model-based covert
  timing channels: Automated modeling and evasion,'' in {\em Recent Advances in
  Intrusion Detection}, pp.~211--230, Springer, 2008.

\bibitem{gut2009stopped}
A.~Gut, {\em Stopped random walks}.
\newblock Springer, 1998.

\end{thebibliography}

\appendix
\paragraph{Proof of fact~\ref{f1}}
\begin{align}
\nonumber \mathcal{D}(\mathbb{P}_{\lambda_1}(n)||\mathbb{P}_{\lambda_2}(n)) = \sum_{n=0}^{\infty} \mathbb{P}_{\lambda_1}(n) \log(\mathbb{P}_{\lambda_1}(n))/\mathbb{P}_{\lambda_2}(n))
&= \sum_{n} \mathbb{P}_{\lambda_1}(n) \log{\frac{\lambda_1^n e^{-\lambda_1}/n!}{\lambda_2^n e^{-\lambda_2}/n!}}\\
\nonumber &= \sum_{n} \mathbb{P}_{\lambda_1}(n) \left(\log{\frac{\lambda_1^n }{\lambda_2^n}}  + {\lambda_2-\lambda_1}\right)\\
\nonumber &= (\lambda_2-\lambda_1) + \sum_{n} \mathbb{P}_{\lambda_1}(n) n \log{\frac{\lambda_1 }{\lambda_2}}  \\
\nonumber &= (\lambda_2-\lambda_1)  + \log{\frac{\lambda_1 }{\lambda_2}}  \mathbb{E}_1[N] \\
\nonumber &= (\lambda_2-\lambda_1)  +\lambda_1 \log{\frac{\lambda_1 }{\lambda_2}} 
\end{align}
\noindent where $\mathbb{E}_1[K] $ is the expected value of the Poisson distribution with parameter $\lambda_1$.

\paragraph{Proof of~\eqref{eq:ineq0}}

Consider $x\geq 0$, $f(x)=\ln(1+x)$, and $g(x)=x-\frac{x^2}{2}$. Therefore
\begin{align}
f'(x)-g'(x)=  \frac{1}{1+x} -\left(1-x\right) =\frac{x^2}{1+x} \geq 0  \end{align}
\noindent On the other hand $f(0)=g(0)=0$, therefore
\begin{align}
f(x)-g(x)=\int\limits_{0}^{x} \left(f'\left(x\right)-g'\left(x\right)\right) dx  \geq 0
\end{align}
\noindent Thus, $\ln(1+x) \geq x-\frac{x^2}{2}$ for $x\geq 0$.

\end{document}